\renewcommand{\epsilon}{\eps}
\begin{document}
\title{Coloring 3-Colorable Graphs with Low Threshold Rank}

\author{Jun-Ting Hsieh\thanks{Carnegie Mellon University. \texttt{juntingh@cs.cmu.edu}. Supported by NSF CAREER Award \#2047933.}}


\date{\today}

\maketitle

\begin{abstract}
    We present a new algorithm for finding large independent sets in $3$-colorable graphs with small $1$-sided threshold rank.
    Specifically, given an $n$-vertex $3$-colorable graph whose uniform random walk matrix has at most $r$ eigenvalues larger than $\varepsilon$, our algorithm finds a proper $3$-coloring on at least $(\frac{1}{2}-O(\varepsilon))n$ vertices in time $n^{O(r/\varepsilon^2)}$.
    This extends and improves upon the result of Bafna, Hsieh, and Kothari \cite{BHK25} on $1$-sided expanders.
    Furthermore, an independent work by Buhai, Hua, Steurer, and Vári-Kakas~\cite{BHSV25} shows that it is UG-hard to properly $3$-color more than $(\frac{1}{2}+\varepsilon)n$ vertices, thus establishing the tightness of our result.

    Our proof is short and simple, relying on the observation that for any distribution over proper $3$-colorings, the correlation across an edge must be large if the marginals of the endpoints are not concentrated on any single color.
    Notably, this property fails for $4$-colorings, which is consistent with the hardness result of \cite{BHK25} for $4$-colorable $1$-sided expanders.
\end{abstract}




\section{Introduction}

Coloring $3$-colorable graphs is a classical and challenging problem in graph algorithms.
Starting with the seminal work of Wigderson~\cite{Wigderson83}, a long line of research \cite{Blum94,BlumK97,KMS98,ACC06,Chlamtac09,KawarabayashiT17,KTY24} has focused on finding polynomial-time algorithms that color $3$-colorable graphs using as few colors as possible,
with the current best achieving an $\wt{O}(n^{0.197})$-coloring~\cite{KTY24}.
The difficulty in improving the above algorithms may be inherent due to strong hardness results.
Bansal and Khot \cite{BansalK09} proved that even if the graph is almost $2$-colorable (that is, becomes $2$-colorable after removing any constant fraction of the vertices), it is NP-hard to find an independent set of linear size, assuming the Unique Games Conjecture.
This rules out any algorithm that colors such graphs with any constant number of colors.
For exactly $3$-colorable graphs, similar hardness results are known under variants of the Unique Games Conjecture (with perfect completeness) \cite{DinurS05,DinurMR06,DinurKPS10,KhotS12,GuruswamiS20}.

In light of the hardness results for worst-case instances, a line of work has focused on algorithms for $3$-colorable graphs that satisfy additional structural guarantees.
One direction explores planted random models for coloring~\cite{BlumS95,AlonK97}, while another studies colorable graphs with expansion properties~\cite{AroraG11,DavidF16,KumarLT18}.
Notably, David and Feige \cite{DavidF16} gave polynomial-time algorithms for finding linear-sized independent sets in planted $k$-colorable graphs that are \emph{$2$-sided spectral expanders} --- where all non-trivial eigenvalues $1 \geq \lambda_2 \geq \lambda_3 \geq \cdots \geq \lambda_n \geq -1$ of the uniform random walk matrix are bounded \emph{in absolute value}.
In fact, all these works~\cite{AroraG11,DavidF16,KumarLT18} crucially rely on bounds on the negative eigenvalues of the graph.

Recently, Bafna, Hsieh and Kothari~\cite{BHK25} studied $3$-colorable \emph{$1$-sided expanders}, without any assumption on the negative spectrum (i.e., only assuming lower bounds on the spectral gap of the Laplacian).
Surprisingly, they showed that the complexity of the problem is drastically different between $3$- and $\geq 4$-colorable 1-sided expanders.
Specifically, they gave an algorithm that finds an independent set of size $\geq cn$ in an almost $3$-colorable graph with $\lambda_2 \leq c$, for some small universal constant $c$.
On the other hand, they proved, assuming the Unique Games Conjecture, that it is NP-hard to find an $\Omega(n)$-sized independent set in an almost $4$-colorable graph, even when the graph has nearly perfect $1$-sided expansion, i.e., $\lambda_2 \leq o(1)$.
This is in sharp contrast to previous works \cite{AroraG11,DavidF16,KumarLT18} whose techniques extend to $k$-colorable $2$-sided expanders for all constant $k$.

A natural next step is to consider \emph{low threshold rank graphs} with upper bounds on $\lambda_r$ for some $r > 2$, which generalize $1$-sided expanders (the $r=2$ case).
Many interesting graph families have small threshold rank, including small-set expanders and hypercontractive graphs \cite{ABS15,Steurer10}.
For Unique Games, algorithms developed for expanders and then for low threshold rank graphs \cite{Trevisan08,AroraKKSTV08,MakarychevM11,Kolla11} directly led to the breakthrough subexponential-time algorithm for general instances \cite{ABS15}.
In a similar spirit, designing coloring algorithms for low threshold rank graphs may offer a potential stepping stone toward subexponential-time algorithms for coloring worst-case $3$-colorable graphs with $n^{o(1)}$ or even $\polylog(n)$ colors.


\subsection{Our results}

We extend the result of \cite{BHK25} to graphs with low threshold rank, again without any assumption on the negative spectrum.
Moreover, we obtain improved guarantees on the size of the coloring, and our proof is much simpler compared to \cite{BHK25}.

\parhead{3-coloring.}
A proper $3$-coloring on $m$ vertices refers to a partial coloring in which $m$ vertices are colored, and every edge with both endpoints colored must have different colors.
Equivalently, it is a collection of $3$ disjoint independent sets whose union has size $m$.
We say that a graph is $\delta$-almost $3$-colorable if there is a proper $3$-coloring on $1-\delta$ fraction of the vertices.

\begin{mtheorem} \label{thm:main}
    Let $\eps \in (0,1)$, $\delta \in [0,1]$ and $r, n \in \N$.
    There is an algorithm that, given an $n$-vertex regular graph that is $\delta$-almost $3$-colorable with at most $r$ eigenvalues larger than $\eps/100$, outputs a proper $3$-coloring on at least $(\frac{1}{2} - \eps - O(\delta))n$ vertices in time $n^{O(r/\eps^2)}$.
\end{mtheorem}

For comparison, the result of \cite{BHK25} finds only one independent set of size $10^{-4}n$ while assuming $\lambda_2 \leq 10^{-4}$, and they showed that the same guarantee is UG-hard for almost $4$-colorable graphs even if $\lambda_2 \leq o(1)$.
The algorithm of \cite{BHK25} relies on a novel combinatorial clustering property of proper $3$-colorings on $3$-colorable $1$-sided expanders (along with involved analyses to formalize it in the Sum-of-Squares proof system).
Unfortunately, their techniques do not seem to generalize to low threshold rank graphs in a natural way.

In a concurrent and independent work, Buhai, Hua, Steurer, and Vári-Kakas~\cite{BHSV25} showed that for every $\eps>0$, given an $\eps$-almost $3$-colorable graph with $\lambda_2\leq \eps$, it is UG-hard to find a proper $(1/\eps)$-coloring for at least a $(\frac{1}{2}+\eps)$ fraction of vertices.
Therefore, \Cref{thm:main} is essentially \emph{tight}, matching their hardness result at an intriguing threshold of coloring $1/2$ of the vertices.

\cite{BHSV25} also showed several remarkable results that improve on \cite{BHK25}.
Notably, given a $3$-colorable $1$-sided expander with an extra assumption that all color classes have size at most $(1/2-\eps)n$, their algorithm finds a proper $3$-coloring for $1-\eps$ fraction of vertices, where $\eps \to 0$ as $\lambda_2 \to 0$.
On the other hand, they proved that a similar guarantee is not possible under the weaker guarantee of low threshold rank:
even if all color classes have size at most $n/3$ and only assuming $\lambda_3 \leq \eps$, it is UG-hard to find a proper ($1/\eps)$-coloring for at least a $0.9$ fraction of vertices.
This shows that it is unlikely to improve \Cref{thm:main} much even under the guarantee that the color classes are balanced.

\parhead{Large independent set.}
Similar to \cite{BHK25}, our main ideas extend to finding large independent sets in low threshold rank graphs that contain independent sets of size $(\frac{1}{2}-\delta)n$
\footnote{When the input graph has an independent set of size $(\frac{1}{2}+\delta)n$,
the factor-2 approximation algorithm for vertex cover finds an independent set of size $2\delta n$.}
(in fact, this is the easier case).

\begin{mtheorem} \label{thm:main-is}
    Let $\eps,\delta \in (0,1)$ and $r, n \in \N$.
    There is an algorithm that, given an $n$-vertex regular graph that contains an independent set of size $(\frac{1}{2}-\delta)n$ and has at most $r$ eigenvalues larger than $\eps^5/100$,
    outputs an independent set of size at least $(\frac{1}{2} - 2\delta - \eps)n$ in time $n^{O(r/\eps^{10})}$.
\end{mtheorem}

For comparison, the result of \cite{BHK25} finds an independent set of size $10^{-3}n$ while assuming $\lambda_2 \leq 1-40\delta$.
We are able to find an independent set of size almost $n/2$, but this requires $\lambda_r$ to be close to $0$.

\subsection{Brief overview}

An interesting challenge for the $3$-coloring problem is to design a strategy that works for $3$-colorable $1$-sided expanders but fails for $4$-colorable ones, due to the hardness result for $4$-colorable expanders \cite{BHK25}.
For \cite{BHK25}, they discovered the combinatorial clustering property of $3$-colorings that does not hold for $4$-colorings.
In this work, we identify a simple correlation property, which we describe next.

Suppose we have a distribution $\mu$ over proper $3$-colorings of the graph (with domain $[3]^n$), where for every edge $(u,v)$ we have $\Pr_{\mu}[X_u = X_v] = 0$.
It is easy to verify that the set $S_{\sigma} = \{u\in[n]: \Pr_{\mu}[X_u = \sigma] > 1/2\}$ forms an independent set for each $\sigma\in [3]$ (see, e.g., \cite[Fact 3.8]{BHK25}).
Let $T = [n] \setminus (S_1 \cup S_2 \cup S_3)$ be the ``bad'' set (where $X_u$ is not concentrated on any color), and we would like to upper bound $|T|$.

Our main observation is that for any edge $(u,v)$ with both $u,v \in T$, the random variables $X_u$ and $X_v$ must have a large ``correlation''.
See \Cref{lem:mutual-info-lb} for a precise statement.
We emphasize that an analogous result for $4$-coloring does not hold; see \Cref{rem:4-coloring-fails}.

If $|T| > (\frac{1}{2}+\eps)n$, then there must be at least an $\Omega(\eps)$ fraction of edges with both endpoints in $T$.
This means that the ``local correlation'' of $\mu$ --- the expected correlation over the edges --- is at least $\Omega(\eps)$.
Then, it is known \cite{BRS11} that large local correlation implies large global correlation in low threshold rank graphs (\Cref{lem:local-correlation}).
On the other hand, a standard technique known as global correlation rounding~\cite{BRS11,GuruswamiS11,RT12} shows that iteratively conditioning on the vertices can reduce the global correlation to be arbitrarily small, which contradicts large global correlation.

Thus, our algorithm is simple:
solve the degree-$O(r/\eps^2)$ Sum-of-Squares relaxation and obtain a pseudo-distribution $\mu'$ (see \Cref{sec:sos} for background).
Next, condition on $\mu'$ to obtain a pseudo-distribution $\mu$ with a small enough global correlation.
Then, color the vertices in $S_1, S_2, S_3$ accordingly.
The above argument shows that $S_1 \cup S_2 \cup S_3$ must have size at least $(\frac{1}{2}-\eps) n$.

\section{Preliminaries}

We will need the following standard inequality.

\begin{fact}[Pinsker's inequality] \label{fact:pinskers}
    $\|P-Q\|_1 \leq \sqrt{2\KL(P\|Q)}$.
    Consequently, for joint distributions $X,Y$, $\|(X,Y) - X \otimes Y\|_1^2 \leq 2 I(X;Y)$.
\end{fact}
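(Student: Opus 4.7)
The plan is to prove Pinsker's inequality by the classical reduction to the Bernoulli case, followed by a short one-variable calculus argument. The consequence for mutual information will then be immediate from the definition of $I(X;Y)$.

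First I would reduce to the binary case. Let $A = \{x : P(x) \geq Q(x)\}$ and set $p = P(A)$, $q = Q(A)$. By the standard identity for total variation, $\|P-Q\|_1 = 2(p-q)$. By the data-processing inequality for KL divergence applied to the two-cell partition $\{A, A^c\}$, we have $\KL(P\|Q) \geq \KL(\mathrm{Bern}(p) \| \mathrm{Bern}(q))$. Hence it suffices to prove the scalar inequality $\KL(\mathrm{Bern}(p) \| \mathrm{Bern}(q)) \geq 2(p-q)^2$.

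For the binary case, I would fix $p$ and define $h(q) = \KL(\mathrm{Bern}(p) \| \mathrm{Bern}(q)) - 2(p-q)^2$. A direct differentiation yields
\[
h'(q) = \frac{q-p}{q(1-q)} + 4(p-q) = (p-q)\left(4 - \frac{1}{q(1-q)}\right).
\]
Since $q(1-q) \leq 1/4$ on $[0,1]$, the second factor is non-positive, so $h'(q)$ has the opposite sign of $(p-q)$: it is negative on $q<p$ and positive on $q>p$. Therefore $h$ attains its global minimum at $q=p$, where $h(p)=0$, which gives $h(q) \geq 0$ everywhere and proves the scalar bound.

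Finally, since by definition $I(X;Y) = \KL\bigl((X,Y) \,\|\, X \otimes Y\bigr)$, substituting $P = (X,Y)$ and $Q = X \otimes Y$ into the first part and squaring gives $\|(X,Y) - X\otimes Y\|_1^2 \leq 2 I(X;Y)$. The only slightly delicate step is the sign analysis of $h'$ — the rest is either definition-chasing or the one application of data processing. An alternative route would avoid calculus by writing $\KL(\mathrm{Bern}(p)\|\mathrm{Bern}(q)) = \int_q^p \frac{t-q}{t(1-t)}\,dt$ and bounding the integrand by $4(t-q)$, but the calculus version is the cleanest.
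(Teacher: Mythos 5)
The paper states Pinsker's inequality as a standard fact and offers no proof of its own, so there is no in-paper argument to compare against. Your proof is correct and is the standard textbook derivation: reduce to the two-point case via the identity $\|P-Q\|_1 = 2\bigl(P(A)-Q(A)\bigr)$ for $A = \{x : P(x)\geq Q(x)\}$, apply the data-processing inequality for KL divergence to the partition $\{A, A^c\}$, and then verify the scalar bound $\KL\bigl(\mathrm{Bern}(p)\,\|\,\mathrm{Bern}(q)\bigr) \geq 2(p-q)^2$ by one-variable calculus. The derivative computation
\[
h'(q) = (p-q)\Bigl(4 - \tfrac{1}{q(1-q)}\Bigr)
\]
and the ensuing sign analysis are correct (the factor $4 - \tfrac{1}{q(1-q)}$ is $\leq 0$ since $q(1-q)\leq 1/4$), so $h$ is minimized at $q=p$ where it vanishes. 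The final step, substituting $P=(X,Y)$ and $Q = X\otimes Y$ and using $I(X;Y) = \KL\bigl((X,Y)\,\|\,X\otimes Y\bigr)$, is exactly what the paper's ``consequently'' intends. One small point worth making explicit: the constant $2$ requires KL to be measured in nats, which your differentiation (treating $\log$ as the natural logarithm) silently assumes; this is consistent with the convention used in the paper's later applications of the fact.
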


\subsection{Background on Sum-of-Squares}
\label{sec:sos}

We refer the reader to the monograph~\cite{FKP19} and the lecture notes~\cite{BS16} for a detailed exposition of the Sum-of-Squares (SoS) method. 

\parhead{Pseudo-distributions.}
Pseudo-distributions are generalizations of probability distributions.
Formally, a pseudo-distribution on $\R^n$ is a finitely supported \emph{signed} measure $\mu :\R^n \rightarrow \R$ such that $\sum_{x} \mu(x) = 1$. The associated \emph{pseudo-expectation} is a linear operator $\pE_\mu$ that assigns to every polynomial $f:\R^n \rightarrow \R$ the value $\pE_\mu f = \sum_{x} \mu(x) f(x)$, which we call the pseudo-expectation of $f$. We say that a pseudo-distribution $\mu$ on $\R^n$ has \emph{degree} $d$ if $\pE_\mu[f^2] \geq 0$ for every polynomial $f$ on $\R^n$ of degree $\leq d/2$.

A degree-$d$ pseudo-distribution $\mu$ is said to satisfy a constraint $\{q(x) \geq 0\}$ for any polynomial $q$ of degree $\leq d$ if for every polynomial $p$ such that $\deg(p^2) \leq d-\deg(q)$, $\pE_\mu[p^2 q] \geq 0$.
We say that $\mu$ $\tau$-approximately satisfies a constraint $\{q \geq 0\}$ if for any sum-of-squares polynomial $p$, $\pE_\mu[pq] \geq - \tau \Norm{p}_2$ where $\Norm{p}_2$ is the $\ell_2$ norm of the coefficient vector of $p$. 

We rely on the following connection that forms the basis of the sum-of-squares algorithm.

\begin{fact}[Sum-of-Squares algorithm, \cite{Par00,Las01}] \label{fact:sos-algorithm}
Given a system of degree $\leq d$ polynomial constraints $\{q_i \geq 0\}$ in $n$ variables and the promise that there is a degree-$d$ pseudo-distribution satisfying $\{q_i \geq 0\}$ as constraints, there is a $n^{O(d)} \polylog( 1/\tau)$ time algorithm to find a pseudo-distribution of degree $d$ on $\R^n$ that $\tau$-approximately satisfies the constraints $\{q_i \geq 0\}$.
\end{fact}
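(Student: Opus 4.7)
The plan is to reformulate the question as a semidefinite program (SDP) of size $n^{O(d)}$ and then invoke a standard SDP solver. First I would note that a degree-$d$ pseudo-distribution $\mu$ on $\R^n$ is, as a linear functional on polynomials of degree at most $d$, completely determined by its pseudo-moments $m_\alpha = \pE_\mu[x^\alpha]$ for $\alpha \in \N^n$ with $|\alpha|\leq d$, of which there are $\binom{n+d}{d} = n^{O(d)}$. Define the \emph{moment matrix} $M$, indexed by multi-indices $\alpha,\beta$ with $|\alpha|,|\beta|\leq d/2$, by $M[\alpha,\beta] = m_{\alpha+\beta}$; identifying a polynomial $f$ with its coefficient vector gives $\pE_\mu[f^2] = f^\top M f$, so the defining property of a degree-$d$ pseudo-distribution is exactly $M \succeq 0$.

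Next I would encode each constraint $\{q_i \geq 0\}$ by a \emph{localizing matrix} $M^{(i)}$, indexed by multi-indices of degree at most $(d-\deg q_i)/2$, with entries $M^{(i)}[\alpha,\beta] = \pE_\mu[x^{\alpha+\beta} q_i(x)]$, each of which is a linear expression in the unknowns $m_\gamma$. The condition that $\mu$ satisfies $\{q_i \geq 0\}$ then reads $p^\top M^{(i)} p = \pE_\mu[p^2 q_i] \geq 0$ for all admissible $p$, equivalently $M^{(i)}\succeq 0$. Combined with the normalization $m_0 = 1$, this yields an SDP in $n^{O(d)}$ scalar variables with PSD constraints of total dimension $n^{O(d)}$. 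By the hypothesis of the fact this SDP is feasible, and via the moment-matrix correspondence every feasible point furnishes a pseudo-distribution meeting the stated constraints.

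Finally I would apply the ellipsoid method (or an interior-point method) equipped with a weak PSD separation oracle: in time $n^{O(d)}\polylog(1/\tau)$ it returns a point at which each PSD block holds up to additive spectral error $\tau'$, for any desired inverse polynomial $\tau'$ in the instance size and $\tau$. Translating back gives a pseudo-distribution whose moment and localizing matrices are PSD up to $\tau'$, and a short calculation bounding the coefficient-norm of a sum-of-squares polynomial $p$ by the Frobenius norm of the Gram matrix of its SoS decomposition converts this spectral slack into the $\tau \|p\|_2$-type slack required by the definition of $\tau$-approximate satisfaction. The main obstacle is precisely this last bookkeeping step: calibrating the SDP numerical tolerance to the specific $\|p\|_2$-dependent notion of $\tau$-approximate satisfaction used here, and supplying the a priori magnitude bounds on the pseudo-moments (typically via an added explicit ball constraint, or via the implicit bounds coming from Booleanity-type constraints) that the ellipsoid method needs to be genuinely polynomial. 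Both are standard in the Lasserre/Parrilo SDP literature, so the substantive content lies in the reduction to SDP rather than in solving it.
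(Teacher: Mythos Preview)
The paper does not give its own proof of this statement: it is recorded as a \emph{Fact} with citations to Parrilo and Lasserre, and immediately afterward the paper explicitly defers all numerical-precision issues. Your sketch --- encoding a degree-$d$ pseudo-distribution by its $n^{O(d)}$ pseudo-moments, expressing the pseudo-distribution and constraint conditions as PSD-ness of the moment and localizing matrices, and then solving the resulting $n^{O(d)}$-sized SDP to accuracy $\tau$ via the ellipsoid or an interior-point method --- is precisely the standard argument underlying those references, so there is nothing in the paper to compare against and your plan is correct at the level of detail appropriate here.
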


We may set the error parameter $\tau$ to be $2^{-\poly(n)}$, which is small enough for our analysis.
We will thus omit discussions of numerical precision in this paper.

\parhead{SoS relaxation of $3$-coloring.}
We consider the domain $X \in \Sigma^n$ for some finite alphabet $\Sigma$.
The variables are $\{Y_{u,\sigma}\}_{u\in[n], \sigma\in \Sigma}$, and we have the following constraints:
for all $u\in [n]$ and $\sigma\neq \sigma' \in \Sigma$,
(1) $Y_{u,\sigma} = Y_{u,\sigma}^2$,
(2) $Y_{u,\sigma} Y_{u,\sigma'}=0$,
and (3) $\sum_{\sigma\in \Sigma} Y_{u,\sigma} = 1$.
Intuitively, $Y_{u,\sigma}$ is the $\{0,1\}$ indicator that vertex $u$ is assigned $\sigma$.
If $\Sigma$ is the set of colors, then the coloring constraints include $Y_{u,\sigma} Y_{v,\sigma} = 0$ for all $(u,v)\in E(G)$ and $\sigma\in\Sigma$.

For ease of notation, we will use $\1(X_u = \sigma)$ to refer to $Y_{u,\sigma}$.
Moreover, given a pseudo-distribution $\mu$ that satisfies the above constraints, we will write $\pPr_{\mu}[X_u = \sigma] = \pE_{\mu}[Y_{u,\sigma}]$ and similarly
$\pPr_{\mu}[X_u = \sigma, X_v = \sigma'] = \pE_{\mu}[Y_{u,\sigma} Y_{v,\sigma'}]$.

Given a pseudo-distribution $\mu$ of degree $O(k|\Sigma|)$, the marginal of $\mu$ over any set of $k$ vertices corresponds to a true distribution on $\Sigma^k$.
Thus, standard statistical quantities involving random variables $X_u, X_v$ --- such as the mutual information $I_{\mu}(X_u; X_v)$ --- are well defined.

\subsection{Global correlation rounding}

An standard technique we need is reducing the average correlation of random variables through conditioning, which was introduced in \cite{BRS11} (termed global correlation reduction).
It is also applicable to pseudo-distributions of sufficiently large degree.
We will use the following version from~\cite{RT12}.

\begin{lemma}[\cite{RT12}]
\label{lem:ragh-tan}
Let $X_1,\ldots, X_n$ be a set of random variables each taking values in $\{1,\ldots, q\}$.
Then, for any $\ell \in \N$, there exists $k \leq \ell$ such that: 
\begin{align*}
    \E_{i_1,\ldots,i_k \sim [n]}\E_{i,j \sim [n]}[I(X_i;X_j \mid X_{i_1}, \ldots,X_{i_k})] \leq \frac{\log q}{\ell - 1} \mper
\end{align*}
\end{lemma}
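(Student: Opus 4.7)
The plan is a standard telescoping argument via the chain rule for mutual information, using the conditional entropy of a uniformly random $X_i$ as a potential function. Since each $X_i$ takes at most $q$ values, the total drop in this potential as we condition on additional variables is bounded by $\log q$, so after $\ell$ steps at least one step must contribute only a small amount. Concretely, define
\[ \Phi_k \;:=\; \E_{i_1,\ldots,i_k \sim [n]} \E_{i \sim [n]}\bigl[H(X_i \mid X_{i_1},\ldots,X_{i_k})\bigr], \]
so that $\Phi_0 = \E_i[H(X_i)] \leq \log q$ and $\Phi_k \geq 0$ for all $k$. The key identity I would establish is
\[ \Phi_k - \Phi_{k+1} \;=\; \E_{i_1,\ldots,i_k} \E_{i,j}\bigl[I(X_i; X_j \mid X_{i_1},\ldots,X_{i_k})\bigr], \]
which follows by writing $\Phi_{k+1}$ as an expectation over an additional fresh index $i_{k+1} \sim [n]$, recognizing the inner difference $H(X_i \mid X_{i_1},\ldots,X_{i_k}) - H(X_i \mid X_{i_1},\ldots,X_{i_{k+1}})$ as $I(X_i; X_{i_{k+1}} \mid X_{i_1},\ldots,X_{i_k})$, and relabeling $i_{k+1}$ as $j$ (valid since $i$ and $j$ in the conclusion are simply two i.i.d.\ uniform samples from $[n]$).

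Summing this identity for $k = 0, 1, \ldots, \ell - 1$ telescopes to $\Phi_0 - \Phi_\ell \leq \log q$. Since every summand is nonnegative, an averaging step yields some $k \in \{0, \ldots, \ell - 1\}$ with $\Phi_k - \Phi_{k+1} \leq \log q / \ell \leq \log q / (\ell - 1)$, which is exactly the claimed bound (and clearly $k \leq \ell$). There is no real technical obstacle; the argument is pure information theory. The only point of care, when the lemma is applied to a pseudo-distribution $\mu$ inside the main algorithm, is that the entropies and mutual informations refer to marginals on at most $\ell + 2$ vertices, which are genuine probability distributions provided $\mu$ has SoS degree at least $\Omega((\ell+2)|\Sigma|)$. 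This is consistent with the degree budget chosen in \Cref{thm:main}, so no adaptation of the above proof to pseudo-distributions is needed beyond this observation.
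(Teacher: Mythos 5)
The paper states this lemma as a citation to [RT12] without reproducing a proof; your telescoping argument with the potential $\Phi_k = \E_{i_1,\ldots,i_k}\E_i[H(X_i\mid X_{i_1},\ldots,X_{i_k})]$ is exactly the standard proof from [RT12]/[BRS11], and it is correct (you even get the slightly sharper bound $\log q/\ell$). Your closing remark about applying it to pseudo-distributions via consistent local marginals on $\ell+2$ variables matches the comment the paper makes right after the lemma, so nothing further is needed.
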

Note that the above lemma holds as long as there is a collection of local distributions over $(X_1,\ldots, X_n)$ that are valid probability distributions over all subsets of $\ell+2$ variables and are consistent with each other.
In particular, this is satisfied by a pseudo-distribution $\mu$ of degree $\geq \ell+2$ over the variables $(X_1,\ldots, X_n)$.

\subsection{Local to global in low threshold rank graphs}
The following lemma is often called the ``local-to-global'' inequality (see \cite[Lemma 6.1]{BRS11}).
We give a proof for completeness.

\begin{lemma}[Local to global] \label{lem:local-to-global}
    Let $M \in \R^{n\times n}$ be positive semidefinite such that $\tr(M) \leq n$.
    Let $G$ be a regular graph on $n$ vertices whose random walk matrix has at most $r$ eigenvalues larger than $\lambda > 0$.
    Then, $\E_{(i,j) \sim G}[M_{ij}] \leq (1-\lambda)\sqrt{r\cdot\E_{i,j\sim [n]}[M_{ij}^2]} + \lambda$.
\end{lemma}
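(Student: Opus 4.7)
The plan is to express the local average as a matrix inner product and then split the random walk matrix spectrally around the threshold $\lambda$, relying on the PSDness of $M$ to absorb the potentially large negative part of the spectrum.

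First, since $G$ is $d$-regular, the random walk matrix $W = A/d$ satisfies $\Pr_{(i,j)\sim G}[(i,j)=(u,v)] = W_{uv}/n$, so
\[
\E_{(i,j)\sim G}[M_{ij}] \;=\; \tfrac{1}{n}\tr(WM).
\]
Diagonalize $W = \sum_{i=1}^{n} \mu_i v_iv_i^{\top}$ with $\mu_1\geq \mu_2\geq \cdots \geq \mu_n$. By assumption, at most $r$ of the $\mu_i$ exceed $\lambda$; the remaining eigenvalues can range all the way down to $-1$, so a naive operator-norm bound on the ``bottom'' part would be too weak. Instead, decompose each eigenvalue as
\[
\mu_i \;=\; \min(\mu_i,\lambda) \;+\; (\mu_i-\lambda)_+,
\]
which gives
\[
\tr(WM) \;=\; \sum_{i=1}^{n} \min(\mu_i,\lambda)\,\langle v_i, Mv_i\rangle \;+\; \sum_{i=1}^{n} (\mu_i-\lambda)_+\,\langle v_i, Mv_i\rangle.
\]

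For the first sum, the PSDness of $M$ gives $\langle v_i, Mv_i\rangle \geq 0$ for every $i$; combined with $\min(\mu_i,\lambda)\leq \lambda$, this sum is at most $\lambda \sum_i \langle v_i, Mv_i\rangle = \lambda\,\tr(M)\leq \lambda n$. For the second sum, $(\mu_i-\lambda)_+$ is nonzero for at most $r$ indices, and is bounded by $1-\lambda$ since $\mu_i\leq 1$. Letting $P = \sum_{i:\mu_i>\lambda} v_iv_i^{\top}$ be the projection onto the top eigenspace (rank $\leq r$), the second sum is at most $(1-\lambda)\,\tr(PM)$. Cauchy--Schwarz for the Frobenius inner product gives $\tr(PM)\leq \|P\|_F\,\|M\|_F \leq \sqrt{r}\,\|M\|_F$, and $\|M\|_F = n\sqrt{\E_{i,j}[M_{ij}^2]}$. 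Combining and dividing by $n$ yields the claimed inequality.

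The proof is essentially two lines once the right decomposition is chosen; the only real obstacle is the first step, where naively splitting $W$ into its projection above $\lambda$ plus a residual of bounded operator norm would force a $\|W_{\leq \lambda}\|_{\mathrm{op}}$ factor that could be as large as $1$ under the one-sided threshold rank assumption. Writing $\mu_i = \min(\mu_i,\lambda)+(\mu_i-\lambda)_+$ and exploiting $\langle v_i,Mv_i\rangle\geq 0$ circumvents this, and is precisely what makes a one-sided bound on the spectrum sufficient.
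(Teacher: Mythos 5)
Your proof is correct and takes essentially the same approach as the paper: both decompose the random walk matrix spectrally at threshold $\lambda$, absorb the bulk (including negative eigenvalues) via $\tr(M)\le n$ and the PSDness of $M$, and bound the rank-$\le r$ top part by Cauchy--Schwarz against $\|M\|_F$. The paper phrases the decomposition as a single operator inequality $A\preceq (1-\lambda)\sum_{i\le r}v_iv_i^\top+\lambda\,\Id$, while you split the trace term by term, but these are the same argument.
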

\begin{proof}
    Let $A$ be the random walk matrix with eigenvalues $1 = \lambda_1 \geq \lambda_2 \geq \cdots \geq \lambda_n$,
    and let $A = \sum_{i=1}^n \lambda_i v_i v_i^\top$ where $v_i$'s are the unit-norm eigenvectors.
    By assumption, we have $\lambda_{r+1} \leq \lambda$.
    Thus, $A \preceq \sum_{i=1}^r v_i v_i^\top + \lambda \sum_{i=r+1}^n v_i v_i^\top = (1-\lambda) \sum_{i=1}^r v_i v_i^\top + \lambda \Id_n$, where we use the fact that $\sum_{i=1}^n v_i v_i^\top = \Id_n$.
    Then,
    \begin{align*}
        \E_{(i,j)\sim G}[M_{ij}]
        &= \frac{1}{n} \angles*{M, A}
        \leq \frac{1-\lambda}{n} \angles*{M,\ \sum_{i=1}^r v_i v_i^\top} + \frac{\lambda}{n} \tr(M)
        \leq \frac{1-\lambda}{n} \|M\|_F \sqrt{r} + \frac{\lambda}{n} \tr(M) \mcom
    \end{align*}
    where the last inequality uses the Cauchy-Schwarz inequality and $\|\sum_{i=1}^r v_i v_i^\top\|_F = \sqrt{r}$.
    Then, since $\|M\|_F = \sqrt{\sum_{i,j\in[n]} M_{ij}^2}$ and $\tr(M) \leq n$, this completes the proof.
\end{proof}

\Cref{lem:local-to-global} will be used in the next lemma that bounds the ``local correlation'' of a pseudo-distribution with small ``global correlation''.
The proof is similar to that of \cite[Lemma 5.4]{BRS11}

\begin{lemma}
\label{lem:local-correlation}
    Let $G$ be a regular graph on $n$ vertices whose random walk matrix has at most $r$ eigenvalues larger than $\lambda > 0$.
    Let $\Sigma$ be a finite alphabet, and let $\mu$ be a pseudo-distribution over $\Sigma^n$ of degree $O(|\Sigma|)$.
    Suppose the global correlation $\E_{u,v\sim [n]}[I_{\mu}(X_u; X_v)] \leq \delta$.
    Then,
    \begin{align*}
        \E_{(u,v)\sim G} \sum_{\sigma\in \Sigma} 
        \parens*{ \pPr_{\mu}[X_u = X_v = \sigma] - \pPr_{\mu}[X_u = \sigma] \pPr_{\mu}[X_v = \sigma] }^2
        \leq \sqrt{2r\delta} + \lambda \mper
    \end{align*}
\end{lemma}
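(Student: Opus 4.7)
The plan is to invoke \Cref{lem:local-to-global} on the $n\times n$ matrix $M$ with entries
\[
M_{uv} = \sum_{\sigma \in \Sigma} \bigl(\pPr_{\mu}[X_u = X_v = \sigma] - \pPr_{\mu}[X_u = \sigma]\,\pPr_{\mu}[X_v = \sigma]\bigr)^2,
\]
which is exactly the quantity whose edge-average we want to bound. To apply that lemma, I need $M$ to be PSD with $\tr(M) \leq n$, and I need an upper bound on $\E_{u,v}[M_{uv}^2]$ in terms of the global correlation $\delta$.

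For the PSD and trace bounds, let $f_{u,\sigma} = \1(X_u = \sigma) - \pPr_{\mu}[X_u = \sigma]$. Since $\mu$ has degree $O(|\Sigma|)$, the form $(p,q) \mapsto \pE_\mu[pq]$ on low-degree polynomials in the variables $Y_{u,\sigma}$ is PSD, so there exist Hilbert-space vectors $\hat{f}_{u,\sigma}$ with $\langle \hat{f}_{u,\sigma}, \hat{f}_{v,\sigma'}\rangle = \pE_{\mu}[f_{u,\sigma} f_{v,\sigma'}]$. Then
\[
M_{uv} = \sum_\sigma \langle \hat{f}_{u,\sigma}, \hat{f}_{v,\sigma}\rangle^2 = \bigl\langle \bigoplus_\sigma \hat{f}_{u,\sigma}^{\otimes 2},\; \bigoplus_\sigma \hat{f}_{v,\sigma}^{\otimes 2} \bigr\rangle,
\]
exhibiting $M$ as a Gram matrix. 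Writing $p_{u,\sigma} = \pPr_{\mu}[X_u=\sigma]$ and using $Y_{u,\sigma}^2 = Y_{u,\sigma}$, a direct computation gives $M_{uu} = \sum_\sigma (p_{u,\sigma}(1 - p_{u,\sigma}))^2 \leq \sum_\sigma p_{u,\sigma}(1 - p_{u,\sigma}) \leq 1$, so $\tr(M) \leq n$.

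To bound $\E_{u,v}[M_{uv}^2]$, Cauchy--Schwarz in the Hilbert space gives $M_{uv} \leq \sum_\sigma p_{u,\sigma}(1-p_{u,\sigma})\,p_{v,\sigma}(1-p_{v,\sigma}) \leq 1$, so $M_{uv}^2 \leq M_{uv}$. Next, $M_{uv}$ is a partial sum of squared entries of $P_{uv} - P_u \otimes P_v$, hence
\[
M_{uv} \leq \|P_{uv} - P_u \otimes P_v\|_2^2 \leq \|P_{uv} - P_u \otimes P_v\|_1^2 \leq 2\, I_\mu(X_u; X_v)
\]
by \Cref{fact:pinskers}. Averaging gives $\E_{u,v}[M_{uv}^2] \leq \E_{u,v}[M_{uv}] \leq 2\delta$, and then \Cref{lem:local-to-global} yields
\[
\E_{(u,v)\sim G}[M_{uv}] \leq (1-\lambda)\sqrt{2r\delta} + \lambda \leq \sqrt{2r\delta} + \lambda,
\]
as required.

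The only mildly delicate step is the Gram representation of the low-degree pseudo-moments $\pE_\mu[f_{u,\sigma} f_{v,\sigma'}]$; this is standard but needs $\mu$ to have enough degree to make these into genuine second moments of a PSD bilinear form, which is exactly what the hypothesis ``$\mu$ of degree $O(|\Sigma|)$'' provides. Everything else is a mechanical application of Pinsker and \Cref{lem:local-to-global}.
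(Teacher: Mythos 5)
Your proposal is correct and follows essentially the same route as the paper: define the entrywise-squared correlation matrix, verify it is PSD with diagonal entries at most $1$, bound its average entry by $2\delta$ via Pinsker, and invoke \Cref{lem:local-to-global}. The only cosmetic difference is how PSDness is certified — you use an explicit Gram/tensor-square representation of the pseudo-moments, while the paper notes that each $M_\sigma \succeq 0$ and that entrywise squares of PSD matrices are PSD (the Schur product theorem), which amounts to the same argument.
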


\begin{proof}
    For $\sigma\in \Sigma$, let $M_{\sigma} \in \R^{n\times n}$ be the matrix with entries $(M_{\sigma})_{uv} = \pPr_{\mu}[X_u = X_v = \sigma] - \pPr_{\mu}[X_u = \sigma] \pPr_{\mu}[X_v = \sigma]$.
    First note that $M_{\sigma} \succeq 0$.
    To see this, consider the vector $z$ such that $z_u = \1(X_u = \sigma) - \pPr_{\mu}[X_u = \sigma]$.
    Then, $(M_{\sigma})_{uv} = \pE_{\mu}[z_u z_v]$ and hence $M_{\sigma} = \pE_{\mu}[zz^\top] \succeq 0$.

    Now, let $\wt{M} \in \R^{n\times n}$ where $\wt{M}_{uv} = \sum_{\sigma\in \Sigma}(M_{\sigma})_{uv}^2$.
    The lemma statement can be simplified to
    \begin{align*}
        \E_{(u,v)\sim G}\bracks*{\wt{M}_{uv}} \leq \sqrt{2r\delta} + \lambda \mper
    \end{align*}
    Since $M_{\sigma} \succeq 0$, squaring each of its entries also gives a positive semidefinite matrix.
    Thus, $\wt{M} \succeq 0$.
    Moreover, the diagonal entries $\wt{M}_{uu} = \sum_{\sigma} (p_{u,\sigma}-p_{u,\sigma}^2)^2 \leq 1$, where $p_{u,\sigma} = \pPr_{\mu}[X_u=\sigma]$.
    This implies that the off-diagonal entries satisfy $0 \leq \wt{M}_{uv} \leq 1$ as well.
    
    Next, for any $u,v \in [n]$,
    \begin{align*}
        \norm*{(X_u, X_v) - X_u \otimes X_v}_1^2
        &= \parens*{ \sum_{\sigma, \sigma'\in \Sigma} \abs*{ \pPr_{\mu}[X_u = \sigma, X_v = \sigma'] - \pPr_{\mu}[X_u = \sigma] \pPr_{\mu}[X_v = \sigma'] } }^2 \\
        &\geq \sum_{\sigma\in \Sigma}  \parens*{ \pPr_{\mu}[X_u = X_v = \sigma] - \pPr_{\mu}[X_u = \sigma] \pPr_{\mu}[X_v = \sigma] }^2 \\
        &= \wt{M}_{uv} \mper
    \end{align*}
    By Pinsker's inequality (\Cref{fact:pinskers}), $\norm*{(X_u, X_v) - X_u \otimes X_v}_1^2 \leq 2\cdot I(X_u ; X_v)$.
    Thus, global correlation being at most $\delta$ implies that
    \begin{align*}
        \E_{u,v \sim [n]} \bracks*{\wt{M}_{uv}^2} 
        \leq \E_{u,v \sim [n]} \bracks*{\wt{M}_{uv}} \leq 2\delta \mper
    \end{align*}
    Then, applying \Cref{lem:local-to-global} to $\wt{M}$, since $G$ has at most $r$ eigenvalues larger than $\lambda$, we have
    \begin{equation*}
        \E_{(u,v)\sim G}\bracks*{ \wt{M}_{uv} }
        \leq \sqrt{r \cdot \E_{u,v \sim [n]} \bracks*{\wt{M}_{uv}^2}} + \lambda
        \leq \sqrt{2r \delta} + \lambda \mper
        \qedhere
    \end{equation*}    
\end{proof}

\section{Proof of \texorpdfstring{\Cref{thm:main}}{Theorem~\ref{thm:main}}: 3-coloring}

The following is the key lemma about $3$-colorings.
We emphasize that the analogous statement for $4$-coloring does not hold; see \Cref{rem:4-coloring-fails}.



\begin{lemma} \label{lem:mutual-info-lb}
    Let $\gamma, \eta \in (0,1/4)$.
    Let $X,Y$ be jointly distributed random variables on $[3] \cup \{\bot\}$.
    Suppose that $\Pr[X=\sigma]$, $\Pr[Y=\sigma] \leq \frac{1}{2}+\gamma$ for all $\sigma \in[3]$, and  moreover, $\Pr[X=\bot]$, $\Pr[Y=\bot] \leq \eta$.
    Then, 
    \begin{align*}
        \sum_{\sigma\in[3]} \Pr[X=\sigma] \cdot \Pr[Y=\sigma] \geq \frac{1}{4} - \frac{\eta}{2} - \gamma \mper
    \end{align*}
\end{lemma}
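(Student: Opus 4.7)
The lemma is an elementary algebraic statement: two distributions on $[3]\cup\{\bot\}$ whose $[3]$-marginals are each capped at $\frac{1}{2}+\gamma$ and whose $\bot$-mass is at most $\eta$ cannot avoid having non-trivial collision probability $\sum_{\sigma\in[3]}\Pr[X=\sigma]\Pr[Y=\sigma]$. The plan is a direct computation exploiting the pointwise inequality
\[
    \bigl(\tfrac{1}{2}+\gamma - \Pr[X=\sigma]\bigr)\bigl(\tfrac{1}{2}+\gamma - \Pr[Y=\sigma]\bigr) \;\geq\; 0 \quad\text{for each } \sigma \in [3],
\]
which holds because both factors are nonnegative by the hypothesis on the $[3]$-marginals.

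Concretely, I would abbreviate $p_\sigma := \Pr[X=\sigma]$ and $q_\sigma := \Pr[Y=\sigma]$, expand the product above, and rearrange to obtain the pointwise bound $p_\sigma q_\sigma \geq (\tfrac{1}{2}+\gamma)(p_\sigma+q_\sigma) - (\tfrac{1}{2}+\gamma)^2$. Summing over $\sigma\in[3]$ and applying the identities $\sum_{\sigma\in[3]} p_\sigma = 1 - \Pr[X=\bot]$ and $\sum_{\sigma\in[3]} q_\sigma = 1 - \Pr[Y=\bot]$ then yields
\[
    \sum_{\sigma\in[3]} p_\sigma q_\sigma \;\geq\; \bigl(\tfrac{1}{2}+\gamma\bigr)\bigl(2 - \Pr[X=\bot] - \Pr[Y=\bot]\bigr) \;-\; 3\bigl(\tfrac{1}{2}+\gamma\bigr)^2.
\]

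All that remains is to substitute $\Pr[X=\bot], \Pr[Y=\bot] \leq \eta$ and simplify. Since the argument is purely algebraic I do not anticipate any conceptual obstacle; the only care needed is in the final bookkeeping, where one expands the right-hand side, groups the linear-in-$\gamma$ and linear-in-$\eta$ contributions, and uses $\gamma,\eta \in (0,1/4)$ to absorb or drop the small nonnegative quadratic corrections (terms of the form $\gamma^2$, $\gamma\eta$, $\eta^2$) so as to land cleanly on $\frac{1}{4}-\frac{\eta}{2}-\gamma$. As a sanity check one can verify that the bound is (essentially) tight at an ``anti-aligned'' configuration in which $p$ concentrates on two colors with maximal admissible mass and $q$ concentrates on a shifted pair; this same extremal picture also clarifies why the argument genuinely uses the alphabet having size $3$, since with four colors one can take $p=(\frac{1}{2},\frac{1}{2},0,0)$ and $q=(0,0,\frac{1}{2},\frac{1}{2})$ and drive the collision probability to zero, consistent with \Cref{rem:4-coloring-fails}.
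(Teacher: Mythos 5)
Your pointwise inequality $(\tfrac{1}{2}+\gamma - p_\sigma)(\tfrac{1}{2}+\gamma - q_\sigma)\geq 0$ is valid, and summing it over $\sigma$ is a genuinely different route from the paper, which instead sorts both marginals and invokes the rearrangement inequality. However, the final bookkeeping does not close. Writing $a=\tfrac{1}{2}+\gamma$ and substituting $\Pr[X=\bot],\Pr[Y=\bot]\leq\eta$, your chain produces
\begin{align*}
\sum_{\sigma\in[3]} p_\sigma q_\sigma \;\geq\; a(2-2\eta)-3a^2 \;=\; \frac{1}{4}-\eta-\gamma-2\gamma\eta-3\gamma^2,
\end{align*}
whose linear $\eta$-coefficient is $1$, not $\tfrac{1}{2}$. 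This is a first-order deficit, so it cannot be repaired by discarding the nonnegative quadratics $\gamma^2,\gamma\eta$: the quantity $\tfrac{1}{4}-\eta-\gamma$ lies strictly below $\tfrac{1}{4}-\tfrac{\eta}{2}-\gamma$ for every $\eta>0$, and your inequalities run in the wrong direction to recover the missing $\tfrac{\eta}{2}$.

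The deeper point is that the failure is not in your method: the lemma as stated appears to be slightly too strong. Take the marginal of $X$ to be $(\tfrac{1}{2}+\gamma,\;\tfrac{1}{2}-\gamma-\eta,\;0,\;\eta)$ and the marginal of $Y$ to be $(0,\;\tfrac{1}{2}-\gamma-\eta,\;\tfrac{1}{2}+\gamma,\;\eta)$, with the last coordinate being $\bot$ and $X,Y$ independent. All hypotheses hold, yet $\sum_{\sigma\in[3]}\Pr[X=\sigma]\Pr[Y=\sigma]=(\tfrac{1}{2}-\gamma-\eta)^2=\tfrac{1}{4}-\gamma-\eta+(\gamma+\eta)^2$, which drops below $\tfrac{1}{4}-\tfrac{\eta}{2}-\gamma$ whenever $\eta>2(\gamma+\eta)^2$ (e.g.\ $\gamma=\eta$ small). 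Correspondingly, the paper's own proof silently uses $\sum_{\sigma\in[3]}\Pr[Y=\sigma]=1$ in the step ``$a_2(b_2+b_3)+a_3 b_1=a_2-b_1(a_2-a_3)$'', which is exactly where the $\eta$-mass of $Y$ is dropped; restoring it also lands at $\tfrac{1}{4}-\eta-\gamma$. The good news is that your bound is what one should expect and is still sufficient where the lemma is applied: with $\eta=\gamma=0.001$, both $\tfrac{1}{4}-\tfrac{3\gamma}{2}$ and $\tfrac{1}{4}-2\gamma$ give $\tfrac{1}{3}(\cdot)^2>\tfrac{1}{50}$ in the proof of \Cref{thm:main}. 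So you should state and prove the lemma with constant $\tfrac{1}{4}-\eta-\gamma$ (or the exact $\tfrac{1}{4}-\eta-\gamma-2\gamma\eta-3\gamma^2$), rather than trying to reach $\tfrac{1}{4}-\tfrac{\eta}{2}-\gamma$.
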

\begin{proof}
    We will use the rearrangement inequality, which states that for any sequences $a_1 \geq a_2 \geq \cdots \geq a_k$ and $b_1 \geq b_2 \geq \cdots \geq b_k$ and any permutation $\pi$ of $[k]$, $a_1 b_{\pi(1)} + a_2 b_{\pi(2)} + \cdots + a_k b_{\pi(k)} \geq a_1 b_k + a_2 b_{k-1} + \cdots + a_k b_1$.

    Consider the sequences $\{a_i\}_{i\in[3]} = \{\Pr[X=\sigma]\}_{\sigma\in[3]}$ and $\{b_i\}_{i\in [3]} = \{\Pr[Y=\sigma]\}_{\sigma\in[3]}$, sorted in descending order.
    Note that both sequences are non-negative and sum up to at least $1-\eta$.
    Then,
    \begin{align*}
        \sum_{\sigma\in[3]} \Pr[X=\sigma] \Pr[Y=\sigma] \geq a_1 b_3 + a_2 b_2 + a_3 b_1
        \geq a_2(b_2+b_3) + a_3 b_1
        = a_2 - b_1 (a_2 - a_3) \mcom
    \end{align*}
    where the first inequality uses the rearrangement inequality, and the second inequality follows from $a_1 \geq a_2$.
    Then, using $a_1,b_1 \leq \frac{1}{2}+\gamma$, $a_2+a_3 \geq 1-\eta-a_1 \geq \frac{1}{2}-\eta-\gamma$, and $0 \leq a_2-a_3 \leq a_2 \leq \frac{1}{2}$ (since $a_1 \geq a_2$ and $a_1+a_2 \leq 1$), we get a lower bound of
    \begin{align*}
        a_2 - \parens*{\frac{1}{2}+\gamma} (a_2 - a_3)
        = \frac{1}{2}(a_2+a_3) - \gamma (a_2-a_3)
        \geq \frac{1}{2}\parens*{\frac{1}{2}-\eta-\gamma} - \frac{\gamma}{2} = \frac{1}{4} - \frac{\eta}{2} - \gamma \mper
    \end{align*}
    This completes the proof.
\end{proof}

\begin{remark} \label{rem:4-coloring-fails}
    For 4-coloring, the analogous statement of \Cref{lem:mutual-info-lb} does not hold.
    For example, suppose $X$ is uniformly distributed over $\{1,2\}$ and $Y$ is uniformly distributed over $\{3,4\}$.
    Then, $\sum_{\sigma\in[4]} \Pr[X=\sigma] \Pr[Y=\sigma] = 0$.
    Indeed, $X$ and $Y$ are not concentrated on any single color, and they have zero correlation.
\end{remark}

We now prove our main result on coloring almost $3$-colorable graphs.

\begin{proof}[Proof of \Cref{thm:main}]
    Let $\lambda \coloneqq \eps/100$.
    We first solve the degree-$O(r/\lambda^2)$ sum-of-squares relaxation of the $\delta$-almost 3-coloring program on the input graph $G$, with alphabet $\Sigma = [3] \cup \{\bot\}$ and additional constraints that
    \begin{itemize}
        \item for any $(u,v)\in E(G)$, $\1(X_u = \sigma) \cdot \1(X_v=\sigma) = 0$ for all $\sigma\in [3]$,
        \item $\sum_{u\in[n]} \1(X_u = \bot) \leq \delta n$.
    \end{itemize}
    The SoS algorithm (\Cref{fact:sos-algorithm}) outputs a pseudo-distribution $\mu'$ that satisfies the above constraints (to $2^{-\poly(n)}$ precision), in time $n^{O(r/\eps^2)}$.

    After $O(r/\lambda^2)$ rounds of conditioning via \Cref{lem:ragh-tan}, we obtain a pseudo-distribution $\mu$ that satisfies the above constraints, and moreover, has small global correlation: $\E_{u,v\sim [n]}[I_{\mu}(X_u; X_v)] \leq \lambda^2/2r$.
    This procedure also takes time $n^{O(r/\eps^2)}$.

    Let $\gamma = 0.001$, and
    let $B \coloneqq \braces*{u\in[n]: \pPr_{\mu}[X_u = \bot] \geq \gamma}$.
    Since $\E_{u\sim [n]} \pPr_{\mu}[X_u = \bot] \leq \delta$ by the constraints, it follows that $|B| \leq O(\delta) n$ by Markov's inequality.
    Let
    \begin{align*}
        S_{\sigma} \coloneqq \braces*{ u\in [n]: \pPr_{\mu}[X_u = \sigma] \geq \frac{1}{2}+\gamma } \mcom
        \quad S \coloneqq S_1 \cup S_2 \cup S_3 \mcom
        \quad T \coloneqq [n] \setminus (S \cup B) \mper
    \end{align*}
    Note that $S_1, S_2,S_3$ must be disjoint.
    Moreover, by coloring $S_{\sigma}$ with color $\sigma$, we get a valid partial $3$-coloring of vertices in $S$.
    This is because for any edge $(u,v)\in E(G)$, if both $\pPr_{\mu}[X_u=\sigma]$ and $\pPr_{\mu}[X_v = \sigma] \geq \frac{1}{2} + \gamma$ for some $\sigma\in[3]$, then by a union bound, we must have $\pPr_{\mu}[X_u = X_v = \sigma] \geq 2\gamma > 0$, which is a contradiction.
    Thus, it suffices to show a lower bound on $|S|$.

    \parhead{Prove that $|S| \geq (\frac{1}{2}-\eps-O(\delta)) n$.}
    The input graph $G$ has at most $r$ eigenvalues larger than $\lambda$.
    Since $\E_{u,v\sim [n]}[I_{\mu}(X_u; X_v)] \leq \lambda^2/2r$, by \Cref{lem:local-correlation}, we have an upper bound on the ``local correlation'':
    \begin{align*}
        \E_{(u,v)\sim G} \sum_{\sigma\in [3]} 
        \parens*{ \pPr_{\mu}[X_u = X_v = \sigma] - \pPr_{\mu}[X_u = \sigma] \pPr_{\mu}[X_v = \sigma] }^2
        \leq \sqrt{2r \cdot \frac{\lambda^2}{2r}} + \lambda
        \leq \frac{\eps}{50} \mper
        \numberthis \label{eq:local-corr-bound}
    \end{align*}
    For any edge $(u,v) \in E(G)$, we must have $\pPr_{\mu}[X_u = X_v = \sigma] = 0$, so the left-hand side of \Cref{eq:local-corr-bound} is exactly
    $\E_{(u,v)\sim G} \sum_{\sigma\in[3]} \pPr_{\mu}[X_u = \sigma]^2 \cdot \pPr_{\mu}[X_v = \sigma]^2$.

    Let us denote $M_{uv} \coloneqq \sum_{\sigma\in[3]} \pPr_{\mu}[X_u = \sigma]^2 \cdot \pPr_{\mu}[X_v = \sigma]^2$.
    For any $(u,v)\in E(G)$ and $u,v \in T$, we have $\pPr_{\mu}[X_u = X_v = \sigma] = 0$ and $\pPr_{\mu}[X_u=\sigma]$, $\pPr_{\mu}[X_v=\sigma] < \frac{1}{2}+\gamma$ for all $\sigma\in [3]$, and $\pPr_{\mu}[X_u=\bot]$, $\pPr_{\mu}[X_v=\bot] \leq \gamma$.
    Thus, applying \Cref{lem:mutual-info-lb}, we get
    \begin{align*}
        M_{uv} \geq \frac{1}{3} \parens*{\sum_{\sigma\in[3]} \pPr_{\mu}[X_u = \sigma] \cdot \pPr_{\mu}[X_v = \sigma]}^2
        \geq \frac{1}{3} \parens*{\frac{1}{4} - \frac{\gamma}{2} -\gamma}^2
        > \frac{1}{50} \mper
    \end{align*}
    Thus, all edges $(u,v)$ fully contained in $T$ have $M_{uv} > 1/50$.
    Let $d$ be the degree of the graph $G$.
    Then, counting the edges between $S \cup B$ and $T$,
    \begin{align*}
        e(S\cup B, T) &= d|T| - 2e(T) = d|S\cup B| - 2e(S \cup B) \\
        \implies 2e(T) &\geq (|T|-|S\cup B|) d = (n-2|S\cup B|) d \mper
    \end{align*}
    This implies that
    \begin{align*}
        \E_{(u,v)\sim G}[M_{uv}]
        \geq \frac{1}{50} \cdot \frac{2e(T)}{nd} \geq \frac{1}{50} \parens*{1 - \frac{2(|S|+|B|)}{n}} \mper
    \end{align*}
    On the other hand, \Cref{eq:local-corr-bound} states that $\E_{(u,v)\sim G}[M_{uv}] \leq \eps/50$.
    This proves that $|S| \geq (\frac{1}{2}-\eps-O(\delta)) n$.
\end{proof}

\section{Proof of \texorpdfstring{\Cref{thm:main-is}}{Theorem~\ref{thm:main-is}}: large independent set}

In this section, we prove our result on finding large independent sets in graphs containing $(\frac{1}{2}-\delta)n$-sized independent sets.

\begin{proof}[Proof of \Cref{thm:main-is}]
    Let $\lambda \coloneqq \eps^5/100$.
    We first solve the degree-$O(r/\lambda^2)$ sum-of-squares relaxation of the independent set program on the input graph $G$, with alphabet $\{0,1\}$ and additional constraints that
    \begin{itemize}
        \item for any $(u,v)\in E(G)$, $\1(X_u = 1) \cdot \1(X_v = 1) = 0$,
        \item $\sum_{u\in[n]} \1(X_u = 1) \geq (\frac{1}{2}-\delta) n$.
    \end{itemize}
    The SoS algorithm (\Cref{fact:sos-algorithm}) outputs a pseudo-distribution $\mu'$ that satisfies the above constraints (to $2^{-\poly(n)}$ precision), in time $n^{O(r/\lambda^2)}$.

    After $O(r/\lambda^2)$ rounds of conditioning via \Cref{lem:ragh-tan}, we obtain a pseudo-distribution $\mu$ that satisfies the above constraints, and moreover, has small global correlation: $\E_{u,v\sim [n]}[I_{\mu}(X_u; X_v)] \leq \lambda^2/2r$.
    This procedure also takes time $n^{O(r/\lambda^2)}$.

    Let $\gamma = \eps/100$ (which is at least $1/\poly(n)$ otherwise the theorem statement is trivial).
    Let
    \begin{align*}
        S \coloneqq \braces*{u\in [n]: 
        \pPr_{\mu}[X_u = 1] \geq \frac{1}{2}+\gamma } \mcom
        \qquad
        A \coloneqq \braces*{u\in [n]: 
        \pPr_{\mu}[X_u = 1] \leq \eps/2 } \mper
    \end{align*}
    Note that $S$ forms an independent set.
    Thus, it suffices to show a lower bound on $|S|$.

    \parhead{Prove that $|S| \geq (\frac{1}{2}-\eps-2\delta)n$.}
    Recall that we have $\sum_{u\in [n]} \pPr_{\mu}[X_u=1] \geq (\frac{1}{2}-\delta)n$, and we  have the upper bound
    \begin{align*}
        \sum_{u\in [n]} \pPr_{\mu}[X_u=1]
        &\leq \frac{\eps}{2} |A| + |S| + \parens*{\frac{1}{2}+\gamma} (n-|S|-|A|)  \\
        &\leq \parens*{\frac{1}{2}+\gamma} n + \frac{1}{2} |S| - \parens*{\frac{1}{2}-\frac{\eps}{2}} |A| \mper
    \end{align*}
    Suppose $|A| \geq (\frac{1}{2}-\frac{\eps}{3})n$, then we have
    \begin{align*}
        |S| \geq (1-\eps)|A| - 2(\delta+\gamma)n
        \geq \parens*{\frac{1}{2}-\eps - 2\delta} n \mper
    \end{align*}
    Next, we show that $|A| < (\frac{1}{2}-\frac{\eps}{3})n$ would lead to a contradiction.

    The input graph $G$ has at most $r$ eigenvalues larger than $\lambda$.
    Since $\E_{u,v\sim [n]}[I_{\mu}(X_u; X_v)] \leq \lambda^2/2r$,
    by \Cref{lem:local-correlation}, we have an upper bound on the ``local correlation'':
    \begin{align*}
        \E_{(u,v)\sim G} \sum_{\sigma\in \{0,1\}} 
        \parens*{ \pPr_{\mu}[X_u = X_v = \sigma] - \pPr_{\mu}[X_u = \sigma] \pPr_{\mu}[X_v = \sigma] }^2
        \leq \sqrt{2r \cdot \frac{\lambda^2}{2r}} + \lambda
        \leq \frac{\eps^5}{50} \mper
        \numberthis \label{eq:local-corr-bound-is}
    \end{align*}
    For any edge $(u,v) \in E(G)$, we must have $\pPr_{\mu}[X_u = X_v = 1] = 0$, so the left-hand side of \Cref{eq:local-corr-bound-is} is exactly
    $\E_{(u,v)\sim G} \pPr_{\mu}[X_u = 1]^2 \cdot \pPr_{\mu}[X_v = 1]^2$.
    Let us denote $M_{uv} \coloneqq \pPr_{\mu}[X_u = 1]^2 \cdot \pPr_{\mu}[X_v = 1]^2$.    
    For any edge $(u,v)\in E(G)$ and $u,v \notin A$, we have $M_{uv} \geq (\eps/2)^4$.
    
    Let $T = [n] \setminus A$, which has size at least $(\frac{1}{2}+\frac{\eps}{3})n$.
    Let $d$ be the degree of the graph $G$.
    Then, counting the edges between $A$ and $T$,
    \begin{align*}
        e(A, T) &= d|T| - 2e(T) = d|A| - 2e(A)
        \leq d|A| \\
        \implies 2e(T) &\geq d(|T|-|A|) > \frac{2\eps}{3}nd \mper
    \end{align*}
    Then,
    \begin{align*}
        \E_{(u,v)\sim G}[M_{uv}] \geq (\eps/2)^4 \cdot \frac{2e(T)}{nd} > \frac{\eps^5}{24} \mper
    \end{align*}
    This contradicts \Cref{eq:local-corr-bound-is}, finishing the proof.
\end{proof}

\section*{Acknowledgements}

The author would like to thank Mitali Bafna and Pravesh K.\ Kothari for helpful discussions and their support in posting this paper.
The author would also like to thank Rares-Darius Buhai, Yiding Hua, David Steurer, and Andor Vári-Kakas for the communication and exchange of results regarding their work~\cite{BHSV25}, and for pointing out an error in an earlier version.

\bibliographystyle{alpha}
\bibliography{threshold-rank}

\end{document}